\documentclass[10pt, a4paper]{amsart}
\usepackage{lmodern}
\usepackage{amssymb}
\usepackage{amsthm}
\usepackage{cite}
\usepackage{txfonts}
\usepackage{enumitem}
\usepackage{dsfont}
\usepackage{graphicx}
\usepackage[usenames, dvipsnames]{color}
\usepackage{tikz}
\usetikzlibrary{automata, positioning, shapes}
\usetikzlibrary{shapes,snakes,calc}
\usetikzlibrary{decorations.pathreplacing}
\usepackage{mathtools}
\allowdisplaybreaks
\newtheorem{theorem}{Theorem}

\newtheorem{corollary}{Corollary}

\newtheorem{defn}{Definition}
\newtheorem{example}{Example}

\newtheorem{prob}{Problem}

\newtheorem{rem}{Remark}

\oddsidemargin25mm
\evensidemargin25mm
\setlength{\textwidth}{175mm}
\setlength{\textheight}{250mm}
\setlength{\hoffset}{-34mm}
\setlength{\voffset}{-17mm}
\setlength{\footskip}{8mm}
\setlength{\parindent}{5mm}
\setlength{\headheight}{0mm}
\setlength{\parskip}{0.5ex}

\usepackage{multicol}


\newcommand{\norm}[1]{\left\lVert{#1}\right\rVert}
\newcommand{\abs}[1]{\left\lvert{#1}\right\rvert}

\newcommand{\pmat}[1]{\begin{pmatrix}#1\end{pmatrix}}




\newcommand{\R}{\mathbb{R}}
\newcommand{\N}{\mathbb{N}}
\renewcommand{\P}{\mathcal{P}}



\newcommand{\Dm}{\delta}
\newcommand{\Sd}{\mathcal{S}_{\delta}}
\newcommand{\Wd}{W_{\delta}}
\newcommand{\sWd}{\mathcal{W}_{\delta}}

\hyphenation{swit-ch-ed}
\hyphenation{swit-ch-es}

\hyphenation{swit-ch-ed}
\hyphenation{swit-ch-es}

\allowdisplaybreaks


\title[]{A new condition for stability of switched linear systems\\under restricted minimum dwell time switching}

\author{Atreyee Kundu}
\thanks{The author is with the Department of Electrical Engineering, Indian Institute of Science Bangalore, India. Email: atreyeek@iisc.ac.in.}
\thanks{Her research work is supported by the DST INSPIRE Faculty Award IFA17-ENG225 from the Department of Science and Technology, Govt. of India.}
\thanks{She thanks Debasish Chatterjee for helpful discussions.}
\keywords{switched linear systems, stability, restricted dwell times, matrix commutators}
\date{\today}


\begin{document}

	\begin{abstract}
      		We propose matrix commutator based stability characterization for discrete-time switched linear systems under restricted switching. Given an admissible minimum dwell time, we identify sufficient conditions on subsystems such that a switched system is stable under all switching signals that obey the given restriction. The primary tool for our analysis is commutation relations between the subsystem matrices. Our stability conditions are robust with respect to small perturbations in the elements of these matrices. In case of arbitrary switching (i.e., given minimum dwell time \(= 1\)), we recover the prior result \cite[Proposition 1]{Agrachev'12} as a special case of our result.
	\end{abstract}
\maketitle
\begin{multicols}{2}
\section{Introduction}
\label{s:intro}
    A \emph{switched system} has two ingredients --- a family of systems and a switching signal. The switching signal selects an \emph{active subsystem} at every instant of time, i.e, the system from the family that is currently being followed \cite[\S 1.1.2]{Liberzon}. Switched systems find wide applications in power systems and power electronics, automotive control, aircraft and air traffic control, network and congestion control, etc. \cite[p.\ 5]{Sun}.
	
We consider a family of discrete-time linear systems
	\begin{align}
	\label{e:family}
		x(t+1) = A_{i}x(t),\:\:x(0)=x_{0},\:\:i\in\P,\:\:t\in\N_{0},
	\end{align}
	where \(x(t)\in\R^{d}\) is the vector of states at time \(t\), \(\P = \{1,2,\ldots,N\}\) is an index set, and \(A_{i}\in\R^{d\times d}\), \(i\in\P\) are constant matrices. Let \(\sigma:\N_{0}\to\P\) be a {switching signal}. A discrete-time {switched linear system} generated by the family of systems \eqref{e:family} and a switching signal \(\sigma\) is described as
	\begin{align}
	\label{e:swsys}
		x(t+1) = A_{\sigma(t)}x(t),\:\:x(0) = x_{0},\:\:t\in\N_{0}.
	\end{align}
	The solution to \eqref{e:swsys} is given by
	\[
		x(t) = A_{\sigma(t-1)}\cdots A_{\sigma(1)}A_{\sigma(0)}x_{0},\:\:t\in\N,
	\]
	where we have suppressed the dependence of \(x\) on \(\sigma\) for notational convenience.

    In this paper we will work with switching signals that obey a pre-specified minimum dwell time \(\Dm\in\N\) on every subsystem \(i\in\P\), i.e., whenever a subsystem \(i\) is activated by \(\sigma\), it remains active for at least \(\Dm\) units of time. In many engineering applications, a restriction on minimum dwell time on subsystems is natural. For instance, actuator saturations may prevent switching frequency beyond a certain limit, or in order to switch from one component to another, a system may undergo certain operations of non-negligible durations leading to a minimum dwell time requirement on each subsystem \cite{Heydari'17}. Let \(0=:\tau_{0}<\tau_{1}<\cdots\) be the \emph{switching instants}; these are the points in time when \(\sigma\) switches from one subsystem to another. Our switching signals satisfy: there exists \(\Dm\in\N\) such that the following condition holds:
	\begin{align}
	\label{e:sw_res}
		\tau_{i+1}-\tau_{i}\geq\Dm,\:\:i=0,1,2,\ldots.
	\end{align}
	Let \(\Sd\) denote the set of all switching signals \(\sigma\) that satisfy condition \eqref{e:sw_res}. Our focus is on global uniform exponential stability (GUES) of the switched system \eqref{e:swsys}.
	\begin{defn}{\cite[Section 1]{Agrachev'12}}
	\label{d:gues}
    \rm{
		The switched system \eqref{e:swsys} is \emph{globally uniformly exponentially stable (GUES) over the set of switching signals \(\Sd\)} if there exist positive numbers \(c\) and \(\lambda\) such that for arbitrary choices of the initial condition \(x_{0}\) and the switching signal \(\sigma\in\Sd\), the following condition holds:
		\begin{align}
		\label{e:gues1}
			\norm{x(t)} \leq c e^{-\lambda t}\norm{x_{0}}\:\:\text{for all}\:t\in\N,
		\end{align}
		where \(\norm{v}\) denotes the Euclidean norm of a vector \(v\).
    }
	\end{defn}
    The term `uniform' in the above definition indicates that the numbers \(c\) and \(\lambda\) can be selected independent of \(\sigma\). Fix a \(\sigma\in\Sd\). Let \(\Wd\) denote the corresponding matrix product defined as: \(\Wd = \cdots A_{\sigma(2)}A_{\sigma(1)}A_{\sigma(0)}\), and \(\sWd\) be the set of all products corresponding to the switching signals \(\sigma\in\Sd\). We let \(\overline{W}\) denote an initial segment of \(\Wd\), and we will use \(\abs{\overline{W}}\) to denote the length of \(\overline{W}\), i.e., the number of matrices that appear in \(\overline{W}\), counting repetitions. Condition \eqref{e:gues1} can be written equivalently as \cite[Section 2]{Agrachev'12}: for every \(\Wd\in\sWd\) and every initial segment \(\overline{W}\) of \(\Wd\), the following condition holds:
	\begin{align}
	\label{e:gues2}
		\norm{\overline{W}}\leq c e^{-\lambda\abs{\overline{W}}}.
	\end{align}
	We will solve the following problem:
	\begin{prob}
	\label{prob:mainprob}
    \rm{
		Given a minimum dwell time \(\Dm\in\N\), find conditions on the matrices \(A_{i}\), \(i\in\P\), such that the switched system \eqref{e:swsys} is GUES over the set of switching signals \(\Sd\).
    }
	\end{prob}

    We will rely on matrix commutators (Lie brackets) of the subsystem matrices to solve Problem \ref{prob:mainprob}. It is well-known that a switched linear system is stable under arbitrary switching (i.e., when \(\delta = 1\)) if all subsystems are Schur stable and commute pairwise \cite{Narendra'94} or are ``sufficiently close'' to a set of matrices whose elements commute pairwise \cite{Agrachev'12}. On the one hand, these conditions are only sufficient, and their non-satisfaction does not guarantee that a switched system is not stable under all switching signals. On the other hand, a switched system that is not stable under arbitrary switching, may be stable under sets of switching signals that obey a certain minimum dwell time. The above features motivate us to derive matrix commutator conditions for stability of \eqref{e:swsys} under the elements of \(\Sd\). Towards this end, we follow the combinatorial analysis technique presented in \cite{Agrachev'12}. In particular, we split matrix products \(\overline{W}\) into sums and apply counting arguments on them.

    Our stability conditions involve upper bounds on the norms of the commutators of the subsystem matrices and a set of scalars relating to the individual matrices and the given minimum dwell time. These conditions also possess inherent robustness in the sense that if the elements of the subsystem matrices are perturbed by a small margin such that the matrices are not ``too far'' from a set of matrices for which certain products commute, then stability of the switched system \eqref{e:swsys} remains preserved under switching signals obeying a minimum dwell time. For \(\Dm = 1\) (i.e., the case of arbitrary switching), we recover \cite[Proposition 1]{Agrachev'12} as a special case of our result.

    The remainder of this paper is organized as follows: in \S\ref{s:prelims} we catalog the tools required for our analysis. Our results appear in \S\ref{s:mainres}. We also describe various features of our results in this section. We conclude in \S\ref{s:concln}.
	
\section{Preliminaries}
\label{s:prelims}
    Since the set \(\Sd\) includes constant switching signals, a necessary condition for GUES over \(\Sd\) is that all subsystem matrices \(A_{i}\), \(i\in\P\), are Schur stable. This implies that there exists \(\N\ni m\geq \Dm\) such that the following condition holds:
    \begin{align}
    \label{e:key_ineq1}
        \norm{A_{i}^{m}}\leq\rho < 1,\:\:\text{for all}\:i\in\P.
    \end{align}
    Of course, the choice of such \(m\) is not unique; we use the smallest \(m\geq\Dm\) that satisfies \eqref{e:key_ineq1}. Schur stability of the matrices \(A_{i}\), \(i\in\P\) is, however, not sufficient to guarantee stability of \eqref{e:swsys} under all elements of \(\Sd\), \(\Dm\in\N\) given, see e.g., \cite[\S 3.2.1]{Liberzon}.

    Let
    \begin{align}
    \label{e:M_defn}
        M = \max_{i\in\P}\norm{A_{i}},
    \end{align}
    \begin{align}
    \label{e:K1_defn}
        K_{1} = \lfloor\frac{\Dm}{m}\rfloor,
    \end{align}
    \begin{align}
    \label{e:K2_defn}
        K_{2} = \big\lfloor\frac{(N-1)(m-1)}{\Dm}\big\rfloor,
    \end{align}
    and
    \begin{align}
    \label{e:K3_defn}
        K_{3} = (N-1)(m-1) - K_{2}\Dm,
    \end{align}
    where for \(y\in\R\), \(\lfloor y\rfloor\) denotes the greatest integer less than or equal to \(y\). We define the commutators of the matrix products \(A_{i}^{p}\) and \(A_{j}^{q}\), \(p,q\in\{1,\Dm\}\) as follows:
    \begin{align}
    \label{e:commutator_defn}
        E_{ij}^{p,q} = A_{i}^{p}A_{j}^{q} - A_{j}^{q}A_{i}^{p},\:\:i,j\in\P.
    \end{align}
        The use of commutators of the matrix products \(A_{i}^{p}\) and \(A_{j}^{q}\), \(p,q\in\{1,\Dm\}\), \(i,j\in\P\) instead of commutators of the matrices \(A_{i}\) and \(A_{j}\), \(i,j\in\P\) as employed in \cite{Agrachev'12}, is motivated by the structure of our switching signals \(\sigma\in\Sd\), see Remark \ref{rem:analysis} for a detailed discussion.
    We are now in a position to present our results.
\section{Results}
\label{s:mainres}
    The following theorem identifies sufficient conditions on the subsystem matrices \(A_{i}\), \(i\in\P\), such that the switched system \eqref{e:swsys} is GUES over \(\Sd\).
    \begin{theorem}
    \label{t:mainres1}
        Consider the family of systems \eqref{e:family}. Let \(\Dm\in\N\) be given, the matrices \(A_{i}\), \(i\in\P\), satisfy \eqref{e:key_ineq1} with \(m\geq\Dm\), and \(\lambda\) be an arbitrary positive number satisfying
        \begin{align}
        \label{e:maincondn1}
            \rho e^{\lambda m} < 1.
        \end{align}
        Suppose that there exist \(\varepsilon_{p,q}\), \(p,q\in\{1,\Dm\}\) small enough such that the following conditions hold:
        \begin{align}
        \label{e:maincondn2}
            \norm{E_{ij}^{p,q}}\leq\varepsilon_{p,q}\:\:\text{for all}\:i,j\in\P,
        \end{align}
        and
        \begin{align}
        \label{e:maincondn3}
            &\rho e^{\lambda m} + \Biggl(K_{1}K_{2}\varepsilon_{\Dm,\Dm}M^{(N-1)(m-1)+m-2\Dm}\nonumber\\
            &+ K_{1}K_{3}\varepsilon_{\Dm,1}M^{(N-1)(m-1)+m-\Dm-1}\nonumber\\
            &+ (m-K_{1}\Dm)K_{2}\varepsilon_{1,\Dm}M^{(N-1)(m-1)+m-\Dm-1}\nonumber\\
            &+ (m-K_{1}\Dm)K_{3}\varepsilon_{1,1}M^{(N-1)(m-1)+m-2}\Biggr)
            \times e^{\lambda\bigl(N(m-1)+1\bigr)}\leq 1,
        \end{align}
        where \(M\), \(K_{1}\), \(K_{2}\), \(K_{3}\) and \(E_{ij}^{p,q}\), \(p,q\in\{1,\Dm\}\), \(i,j\in\P\) are as defined in \eqref{e:M_defn}, \eqref{e:K1_defn}, \eqref{e:K2_defn}, \eqref{e:K3_defn} and \eqref{e:commutator_defn}, respectively. Then the switched system \eqref{e:swsys} is GUES over the set of switching signals \(\Sd\).
    \end{theorem}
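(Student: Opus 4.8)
The plan is to verify the growth condition \eqref{e:gues2}, i.e.\ to produce a constant \(c>0\) such that \(\norm{\overline W}\le ce^{-\lambda\abs{\overline W}}\) for every \(\Wd\in\sWd\) and every initial segment \(\overline W\) of \(\Wd\); this is equivalent to GUES over \(\Sd\). I would argue by strong induction on \(L\Let\abs{\overline W}\). For the base case \(L\le N(m-1)+1\), submultiplicativity gives \(\norm{\overline W}\le M^{L}\), so it suffices to fix \(c\ge\max\{1,(Me^{\lambda})^{N(m-1)+1}\}\); condition \eqref{e:maincondn1} is used only in the inductive step.

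For the inductive step let \(L>N(m-1)+1\) and write \(\overline W=V\,\overline W'\), where \(V\) consists of the \(N(m-1)+1\) left-most matrices of \(\overline W\) and \(\overline W'\) is the remaining initial segment, of length \(L-N(m-1)-1<L\). Over the \(N(m-1)+1\) time instants carried by \(V\), the pigeonhole principle yields a subsystem \(i^{*}\) active on at least \(m\) of them; these instants lie in runs of length \(\ge\Dm\) (up to the runs meeting the ends of the window), and so do the at most \((N-1)(m-1)\) remaining instants of \(V\). Following the motivation in Remark~\ref{rem:analysis}, I would cut every such run into ``atoms'' of the form \(A_{i}^{\Dm}\) and \(A_{i}\); the only commutators needed to interchange adjacent atoms are the \(E_{ij}^{p,q}\), \(p,q\in\{1,\Dm\}\), of \eqref{e:commutator_defn}, since interchanging adjacent atoms \(A_{i}^{p}\) and \(A_{j}^{q}\) inside a product changes it by one term \(\bigl(\text{prefix}\bigr)E_{ij}^{p,q}\bigl(\text{suffix}\bigr)\). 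Re-ordering \(V\) so as to move \(m\) copies of \(A_{i^{*}}\) to its extreme left then gives
\[
\overline W \;=\; A_{i^{*}}^{m}\bigl(V''\overline W'\bigr)\;+\;\sum_{k} X_{k}\,E_{i_{k}j_{k}}^{p_{k},q_{k}}\,Y_{k}\,\overline W',
\]
where \(V''\) collects the \(\le(N-1)(m-1)\) leftover matrices of \(V\), each summand has \(\abs{X_{k}}+\abs{Y_{k}}=N(m-1)+1-p_{k}-q_{k}\), and the sum ranges over the interchanges performed.

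For the leading term, \(\norm{A_{i^{*}}^{m}}\le\rho\) by \eqref{e:key_ineq1} and \(\abs{V''\overline W'}=L-m<L\), so the induction hypothesis gives \(\norm{A_{i^{*}}^{m}\bigl(V''\overline W'\bigr)}\le\rho\,ce^{-\lambda(L-m)}=\bigl(\rho e^{\lambda m}\bigr)ce^{-\lambda L}\), the first summand of \eqref{e:maincondn3}. In each error term I would estimate \(\norm{X_{k}},\norm{Y_{k}}\) crudely by powers of \(M\) and \(\norm{\overline W'}\le ce^{-\lambda(L-N(m-1)-1)}=ce^{-\lambda L}e^{\lambda(N(m-1)+1)}\) by the induction hypothesis, so that \(\norm{X_{k}E_{i_{k}j_{k}}^{p_{k},q_{k}}Y_{k}\overline W'}\le M^{(N-1)(m-1)+m-p_{k}-q_{k}}\,\varepsilon_{p_{k},q_{k}}\,ce^{-\lambda L}e^{\lambda(N(m-1)+1)}\). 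Counting how many interchanges of each type \((p,q)\in\{1,\Dm\}^{2}\) are used — writing the \(m\) gathered copies of \(A_{i^{*}}\) as \(K_{1}\) atoms of exponent \(\Dm\) and \(m-K_{1}\Dm\) atoms of exponent \(1\), and the \(\le(N-1)(m-1)\) obstructing matrices as \(\le K_{2}\) atoms of exponent \(\Dm\) and \(\le K_{3}\) atoms of exponent \(1\) — bounds the numbers of interchanges of types \((\Dm,\Dm),(\Dm,1),(1,\Dm),(1,1)\) by \(K_{1}K_{2},\,K_{1}K_{3},\,(m-K_{1}\Dm)K_{2},\,(m-K_{1}\Dm)K_{3}\) respectively, while the identity \(N(m-1)+1=(N-1)(m-1)+m\) reproduces exactly the four exponents of \(M\) in \eqref{e:maincondn3}. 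Summing,
\[
\norm{\overline W}\le\Bigl(\rho e^{\lambda m}+(\cdots)\,e^{\lambda(N(m-1)+1)}\Bigr)ce^{-\lambda L}\le ce^{-\lambda L}
\]
by \eqref{e:maincondn3}, with \eqref{e:maincondn1} guaranteeing the first term is already \(<1\) so that the small commutator contributions can be absorbed. This closes the induction.

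The step I expect to be the crux is the combinatorial accounting inside \(V\): establishing the \(K_{1}K_{2},K_{1}K_{3},\dots\) bounds on interchanges of each type, which forces the atom decomposition and a worst-case analysis of how the \(\le(N-1)(m-1)\) obstructing matrices split into \(\Dm\)- and \(1\)-atoms. A subsidiary technical point is that after extracting \(A_{i^{*}}^{m}\) the product \(V''\overline W'\) must again be an object to which the induction hypothesis applies — arranged by choosing which \(m\) copies of \(A_{i^{*}}\) to gather so that the residual switching signal still obeys the \(\Dm\)-dwell-time constraint, handling with care the run of \(i^{*}\) that meets an end of \(V\); if necessary one states the induction over a slightly larger class of products closed under such deletions. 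Finally, for \(\Dm=1\) every atom has exponent \(1\), the bracket in \eqref{e:maincondn3} collapses to its single term involving \(\varepsilon_{1,1}\), and one recovers \cite[Proposition 1]{Agrachev'12}.
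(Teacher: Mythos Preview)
Your proposal is correct and follows essentially the same route as the paper: strong induction on \(\abs{\overline W}\), the split \(\overline W=LR\) with \(\abs{L}=N(m-1)+1\), pigeonhole to locate an index with \(\ge m\) occurrences, the commutator-based rewriting \(L=A_{i^{*}}^{m}L_{1}+L_{2}\) with the four-type atom count \((K_{1},m-K_{1}\Dm)\times(K_{2},K_{3})\), and then the induction hypothesis applied to \(L_{1}R\) and \(R\). You have in fact been more explicit than the paper about the subsidiary point that \(L_{1}R\) (your \(V''\overline W'\)) must again lie in the class over which the induction runs; the paper invokes the hypothesis on \(L_{1}R\) without comment.
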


    \begin{proof}
        It suffices to show that if the conditions of Theorem \ref{t:mainres1} hold, then there exists a positive number \(c\) such that \eqref{e:gues2} holds for every initial segment \(\overline{W}\) of every \(\Wd\in\sWd\). We will employ mathematical induction on the length of an initial segment \(\overline{W}\) of \({\Wd}\) to establish \eqref{e:gues2}.

        \emph{A. Induction basis}: Pick \(c\) large enough so that \eqref{e:gues2} holds with all \(\overline{W}\) satisfying \(\abs{\overline{W}}\leq N(m-1)+1\).

        \emph{B. Induction hypothesis}: Let \(\abs{\overline{W}}\geq N(m-1)+2\) and assume that \eqref{e:gues2} is proved for all products of length less than \(\abs{\overline{W}}\).

        \emph{C. Induction step}: Let \(\overline{W} = LR\), where \(\abs{L} = N(m-1)+1 = (N-1)(m-1)+m\).
        We claim that there exists an index \(i\in\P\) such that \(L\) contains at least \(m\)-many \(A_{i}\)'s.
        Indeed, it follows from the fact that \((N-1)(m-1)+m \geq m\) and there are \(N\) subsystems. Without loss of generality, let \(i=1\). We rewrite \(L\) as
        \[
            L = A_{1}^{m}L_{1} + L_{2},
        \]
        where \(\abs{L_{1}} = (N-1)(m-1)\). The term \(L_{2}\) contains at most
        \begin{itemize}[label = \(\circ\), leftmargin = *]
            \item \(K_{1}K_{2}\) terms of length \((N-1)(m-1)+m-2\Dm+1\) with \((N-1)(m-1)+m-2\Dm\) \(A_{i}\)'s and \(1\) \(E_{1i}^{\Dm,\Dm}\),
            \item \(K_{1}K_{3}\) terms of length \((N-1)(m-1)+m-\Dm\) with \((N-1)(m-1)+m-\Dm-1\) \(A_{i}\)'s and \(1\) \(E_{i1}^{\Dm,1}\),
            \item \((m-K_{1}\Dm)K_{2}\) terms of length \((N-1)(m-1)+m-\Dm\) with \((N-1)(m-1)+m-\Dm-1\) \(A_{i}\)'s and \(1\) \(E_{1i}^{1,\Dm}\), and
            \item \((m-K_{1}\Dm)K_{3}\) terms of length \((N-1)(m-1)+m-1\) with \((N-1)(m-1)+m-2\) \(A_{i}\)'s and \(1\) \(E_{1i}^{1,1}\).
        \end{itemize}

        Now, from the sub-multiplicativity and sub-additivity properties of the induced norm, we have
        \begin{align}
        \label{e:pf1_step1}
            \norm{\overline{W}} &= \norm{LR} \leq \norm{A_{1}^{m}}\norm{L_{1}R}+\norm{L_{2}}\norm{R}\nonumber\\
            &\leq \rho c e^{-\lambda(\abs{\overline{W}}-m)}+ \Biggl(K_{1}K_{2}\varepsilon_{\Dm,\Dm}M^{(N-1)(m-1)+m-2\Dm}\nonumber\\
            &+ K_{1}K_{3}\varepsilon_{\Dm,1}M^{(N-1)(m-1)+m-\Dm-1}\nonumber\\
            &+ (m-K_{1}\Dm)K_{2}\varepsilon_{1,\Dm}M^{(N-1)(m-1)+m-\Dm-1}\nonumber\\
            &+ (m-K_{1}\Dm)K_{3}\varepsilon_{1,1}M^{(N-1)(m-1)+m-2}\Biggr)\nonumber\\
            &\hspace*{0.2cm}\times c e^{-\lambda\bigl(\abs{\overline{W}}-(N(m-1)+1)\bigr)}\nonumber\\
            &=ce^{-\lambda\abs{\overline{W}}}\Biggl(\rho e^{\lambda m} + \Big( K_{1}K_{2}\varepsilon_{\Dm,\Dm}M^{(N-1)(m-1)+m-2\Dm}\nonumber\\
            &+ K_{1}K_{3}\varepsilon_{\Dm,1}M^{(N-1)(m-1)+m-\Dm-1}\nonumber\\
            &+ (m-K_{1}\Dm)K_{2}\varepsilon_{1,\Dm}M^{(N-1)(m-1)+m-\Dm-1}\nonumber\\
            &+ (m-K_{1}\Dm)K_{3}\varepsilon_{1,1}M^{(N-1)(m-1)+m-2}\Big)
            \times e^{\lambda\bigl(N(m-1)+1\bigr)}\Biggr).
        \end{align}
        The upper bounds on \(\norm{L_{1}R}\) and \(\norm{R}\) are obtained from the relations \(\abs{\overline{W}} = \abs{A_{1}^{m}}+\abs{L_{1}R}\) and \(\abs{\overline{W}} = \abs{L}+\abs{R}\), respectively. Applying \eqref{e:maincondn3} to \eqref{e:pf1_step1} leads to \eqref{e:gues2}. Consequently, \eqref{e:swsys} is GUES over \(\Sd\).
%
    \end{proof}

    Theorem \ref{t:mainres1} characterizes a subset of the set of all Schur stable matrices that preserves stability of a switched system under all switching signals that obey a pre-specified minimum dwell time on all subsystems. The characterization involves upper bounds on the matrix norms of the commutators of the matrix products \(A_{i}^{p}\) and \(A_{j}^{q}\), \(p,q\in\{1,\Dm\}\), \(i,j\in\P\), and the scalars \(M\), \(m\), \(K_{1}\), \(K_{2}\), \(K_{3}\), which are related to the matrices \(A_{i}\), \(i\in\P\), the total number of subsystems \(N\), and the given minimum dwell time \(\Dm\). Notice that given \(m\geq\Dm\) and \(\rho < 1\), there always exists a positive scalar \(\lambda\) such that \eqref{e:maincondn1} holds. We further rely on the existence of small enough \(\varepsilon_{p,q}\), \(p,q\in\{1,\Dm\}\) that satisfy \eqref{e:maincondn2}-\eqref{e:maincondn3}. The scalars \(\varepsilon_{p,q}\), \(p,q\in\{1,\Dm\}\) give a measure of the ``closeness'' of the set of matrices \(A_{i}\), \(i\in\P\) to a set of matrices for which the matrix products under consideration commute. They associate an inherent ``robustness'' to our stability conditions in the sense that if the elements of \(A_{i}\), \(i\in\P\) are perturbed (e.g., if we are relying on approximate models of the subsystems, or the parameters of the subsystems are prone to evolve over time) in a manner so that conditions \eqref{e:key_ineq1}, \eqref{e:maincondn2}-\eqref{e:maincondn3} continue to hold, then GUES of \eqref{e:swsys} remains preserved under the elements of \(\Sd\). If the subsystem matrices \(A_{i}\) and \(A_{j}\) strictly commute for all \(i,j\in\P\), then we have that the matrices \(A_{i}^{p}\) and \(A_{j}^{q}\) commute for all \(p,q\in\{1,\delta\}\) and all \(i,j\in\P\) \cite[Fact 2.18.3]{Bernstein}. Consequently, \(\varepsilon_{pq} = 0\) for all \(p,q\in\{1,\delta\}\), and condition \eqref{e:maincondn3} holds in view of \eqref{e:maincondn1}. 

    \begin{rem}
    \label{rem:analysis}
    \rm{
    Our analysis technique differs from \cite[Proof of Proposition 1]{Agrachev'12} in the following way: in \cite{Agrachev'12} to arrive at \(A_{1}^{m}L_{1}+L_{2}\), the authors split a matrix product into sums by exchanging at every step two matrices \(A_{1}\) and \(A_{i}\), \(i\neq 1\), that appear consecutively in \(L\). This procedure leads to the usage of the commutators \(E_{ij}^{1,1}\) and a maximum of \(m(N-1)(m-1)\) terms in \(L_{2}\), each of length \(N(m-1)\) containing \(1\) \(E_{ij}^{1,1}\), \(i\neq 1\) and \(N(m-1)-1\) \(A_{i}\)'s. Consider, for example, \(N=3\) and \(m = 3\). Let \(L = A_{3}^{2}A_{2}^{2}A_{1}^{3}\). It can be rewritten as
    \begin{align*}
        L &= A_{1}^{3}A_{3}^{2}A_{2}^{2} - A_{1}A_{1}E_{13}^{1,1}A_{3}A_{2}A_{2} - A_{1}A_{1}A_{3}E_{13}^{1,1}A_{2}A_{2}\\
        &\:\:- A_{1}A_{1}A_{3}A_{3}E_{12}^{1,1}A_{2}
         - A_{1}A_{1}A_{3}A_{3}A_{2}E_{12}^{1,1} - A_{1}E_{13}^{1,1}A_{3}A_{2}A_{2}A_{1}\\
        &\:\:- A_{1}A_{3}E_{13}^{1,1}A_{2}A_{2}A_{1}
         - A_{1}A_{3}A_{3}E_{12}^{1,1}A_{2}A_{1} -
        A_{1}A_{3}A_{3}A_{2}E_{12}^{1,1}A_{1}\\
        &\:\:- E_{13}^{1,1}A_{3}A_{2}A_{2}A_{1}A_{1}
         - A_{3}E_{13}^{1,1}A_{2}A_{2}A_{1}A_{1} -
        A_{3}A_{3}E_{12}^{1,1}A_{2}A_{1}A_{1}\\
        &\:\: - A_{3}A_{3}A_{2}E_{12}^{1,1}A_{1}A_{1}.
    \end{align*}
    In contrast, in this paper we utilize the minimum dwell time property of switching signals to arrive at the desired structure of \(L\). Our procedure involves exchanging \(K_{1}\) products of length \(\Dm\) of \(A_{1}\) with at most \(K_{2}\) products of length \(\Dm\) and \(K_{3}\) entries of matrix \(A_{i}\), \(i\neq 1\), and \(m-K_{1}\Dm\) entries of \(A_{1}\) with at most \(K_{2}\) products of length \(\Dm\) and \(K_{3}\) entries of matrix \(A_{i}\), \(i\neq 1\). This leads us to the usage of the commutators \(E_{ij}^{\Dm,\Dm}\), \(E_{ij}^{1,\Dm}\), \(E_{ij}^{\Dm,1}\) and \(E_{ij}^{1,1}\) unlike only \(E_{ij}^{1,1}\) employed in \cite{Agrachev'12}. Consider the above example with \(\Dm = 2\). We rewrite \(L\) as
        \begin{align*}
            L &= \underline{A_{3}^2A_{1}}A_{2}^{2}A_{1}^{2} - A_{3}^2 E_{12}^{1,2}A_{1}^2\\
            &= A_{1}A_{3}^2\underline{A_{2}^2A_{1}^2} - E_{13}^{1,2}A_{2}^2A_{1}^2 - A_{3}^2E_{12}^{1,1}A_{1}^2\\
            &= A_{1}\underline{A_{3}^2A_{1}^2}A_{2}^2 - A_{1}A_{3}^2E_{12}^{2,2} - E_{13}^{1,2}A_{2}^2A_{1}^2-A_{3}^2E_{12}^{1,2}A_{1}^2\\
            &= A_{1}^3A_{3}^2A_{2}^2 - A_{1}E_{13}^{2,2}A_{2}^2 - A_{1}A_{3}^2E_{12}^{2,2} - E_{13}^{1,2}A_{2}^2A_{1}^2 - A_{3}^2E_{12}^{1,2}A_{1}^2.
        \end{align*}
    See Example \ref{ex:main_ex} for a setting where the conditions of \cite[Proposition 1]{Agrachev'12} do not hold, but the conditions of Theorem \ref{t:mainres1} are satisfied with \(\Dm = 2\). Consequently, while we cannot conclude stability of \eqref{e:swsys} under arbitrary switching based on \cite[Proposition 1]{Agrachev'12}, we arrive at stability of \eqref{e:swsys} under all switching signals obeying a minimum dwell time of \(2\) units on all subsystems. Hence, Theorem \ref{t:mainres1} is useful to switched systems for which stability under arbitrary switching signals cannot be verified and/or stability under switching signals obeying a certain minimum dwell time is of relevance.
    }
    \end{rem}

    It is important to note that Theorem \ref{t:mainres1} provides only sufficient conditions for GUES of \eqref{e:swsys} in the sense that non-satisfaction of conditions \eqref{e:maincondn2}-\eqref{e:maincondn3} does not imply instability of \eqref{e:swsys} under \(\sigma\in\Sd\). Indeed, we consider properties of the subsystem matrices and their commutators that lead to \eqref{e:gues2}, but do not look for conditions on them that are implied by \eqref{e:gues2}. For \(\Dm = 1\) (i.e., the case of arbitrary switching), we recover \cite[Proposition 1]{Agrachev'12} as a special case of Theorem \ref{t:mainres1}.
    \begin{corollary}
    \label{cor:arbitrary_sw}
        Consider the family of systems \eqref{e:family}. Let \(\Dm = 1\), the matrices \(A_{i}\), \(i\in\P\), satisfy \eqref{e:key_ineq1} with \(m\geq\Dm\), and \(\lambda\) be an arbitrary positive number satisfying \eqref{e:maincondn1}. Suppose that there exists \(\varepsilon\) small enough such that
        \begin{align}
        \label{e:maincondn4}
            \norm{E_{ij}^{1,1}}\leq\varepsilon\:\:\text{for all}\:i,j\in\P,
        \end{align}
        and
        \begin{align}
        \label{e:maincondn5}
            \rho e^{\lambda m} + m(N-1)(m-1)\varepsilon M^{N(m-1)+1}\times e^{\lambda\bigl(N(m-1)+1\bigr)}\leq 1.
        \end{align}
        Then the switched system \eqref{e:swsys} is GUES over the set of switching signals \(\Sd\).
    \end{corollary}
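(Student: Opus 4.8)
The plan is to obtain Corollary~\ref{cor:arbitrary_sw} as the specialization of Theorem~\ref{t:mainres1} to \(\Dm = 1\); no fresh induction is needed. First I would record the structural simplifications that occur at \(\Dm=1\): the dwell-time constraint \eqref{e:sw_res} becomes vacuous, so \(\mathcal{S}_{1}\) is the set of \emph{all} switching signals and GUES over \(\mathcal{S}_{1}\) is precisely GUES under arbitrary switching; moreover the only pair \((p,q)\) with \(p,q\in\{1,\Dm\}\) is \((1,1)\), so the only commutator appearing in \eqref{e:commutator_defn} is \(E_{ij}^{1,1}\), and \eqref{e:maincondn2} collapses to \eqref{e:maincondn4} upon setting \(\varepsilon\Let\varepsilon_{1,1}\).

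Next I would evaluate the constants \eqref{e:K1_defn}--\eqref{e:K3_defn} at \(\Dm=1\). One gets \(K_{2} = \lfloor (N-1)(m-1)/1\rfloor = (N-1)(m-1)\), hence \(K_{3} = (N-1)(m-1) - K_{2} = 0\), so the two summands carrying the factor \(K_{3}\) inside the large parentheses of \eqref{e:maincondn3} drop out. In the two surviving summands the exponents of \(M\) both equal \((N-1)(m-1)+m-2\) (since \(2\Dm = \Dm+1 = 2\)) and all the \(\varepsilon_{p,q}\) equal \(\varepsilon\); using \(m-K_{1}\Dm = m-K_{1}\) their sum is
\[
\bigl(K_{1} + (m-K_{1})\bigr)(N-1)(m-1)\,\varepsilon\,M^{(N-1)(m-1)+m-2} = m(N-1)(m-1)\,\varepsilon\,M^{N(m-1)-1},
\]
where I used the identity \((N-1)(m-1)+m-2 = N(m-1)-1\). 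Thus \eqref{e:maincondn3} at \(\Dm=1\) reads \(\rho e^{\lambda m} + m(N-1)(m-1)\,\varepsilon\,M^{N(m-1)-1}\,e^{\lambda(N(m-1)+1)} \le 1\).

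Finally I would reconcile this with the hypothesis \eqref{e:maincondn5}, which displays \(M^{N(m-1)+1}\) in place of \(M^{N(m-1)-1}\). The one point requiring an argument is that \(M\ge 1\) whenever \(m\ge 2\): if \(M<1\) then \(\norm{A_{i}} = \norm{A_{i}^{1}} \le M < 1\) for every \(i\), so \eqref{e:key_ineq1} already holds with \(m=1\), contradicting that \(m\) is the smallest such integer \(\ge\Dm = 1\). Hence \(m\ge 2 \Rightarrow M\ge 1 \Rightarrow M^{N(m-1)-1}\le M^{N(m-1)+1}\), so \eqref{e:maincondn5} implies the inequality displayed above, i.e.\ \eqref{e:maincondn3}. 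In the remaining case \(m=1\), the left segment \(L\) in the proof of Theorem~\ref{t:mainres1} has length \(N(m-1)+1 = 1\), so \(L_{2}=0\) and \eqref{e:maincondn3} reduces to \(\rho e^{\lambda m}\le 1\), which is granted by \eqref{e:maincondn1}. In every case the hypotheses of Theorem~\ref{t:mainres1} are met, and that theorem yields GUES of \eqref{e:swsys} over \(\mathcal{S}_{1}\). The anticipated main difficulty is therefore not conceptual but purely bookkeeping: collapsing the four-term coefficient in \eqref{e:maincondn3} correctly and matching the power of \(M\) against the one in \eqref{e:maincondn5} via the observation that \(m\ge 2\) forces \(M\ge 1\).
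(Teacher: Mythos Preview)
Your approach is essentially the paper's: specialize Theorem~\ref{t:mainres1} to \(\Dm=1\) and watch the four-term sum in \eqref{e:maincondn3} collapse. Two minor bookkeeping differences are worth flagging. First, the paper computes \(K_{1}=m\) and \(m-K_{1}\Dm=0\) directly (reading \eqref{e:K1_defn} as \(K_{1}=\lfloor m/\Dm\rfloor\)), so only the first summand survives; you instead leave \(K_{1}\) unevaluated and combine the two surviving summands via \(K_{1}+(m-K_{1})=m\), which lands on the same expression while sidestepping that computation. Second, you go beyond the paper by explicitly reconciling the exponent \(N(m-1)-1\) that falls out of \eqref{e:maincondn3} with the \(N(m-1)+1\) appearing in \eqref{e:maincondn5}, via the observation that \(m\ge 2\) forces \(M\ge 1\); the paper's proof simply stops at \(M^{N(m-1)-1}\) and declares the argument complete without addressing the mismatch. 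Your treatment is therefore the more careful of the two.
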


    \begin{proof}
        From Theorem \ref{t:mainres1}, we have that the switched system \eqref{e:swsys} is GUES if conditions \eqref{e:maincondn2}-\eqref{e:maincondn3} hold.

        Given \(\Dm = 1\), we have
        \(
            K_{1} = m,\:\: m-K_{1}\Dm = 0,\:\: K_{2} = (N-1)(m-1),\:\:\text{and}\:K_{3} = 0.
        \)
        Consequently,
        \begin{align*}
            & K_{1}K_{2}\varepsilon_{\Dm,\Dm}M^{(N-1)(m-1)+m-2\Dm}\\
            &= m(N-1)(m-1)\varepsilon_{1,1}M^{(N-1)(m-1)+m-2}\\
            &= m(N-1)(m-1)\varepsilon_{1,1}M^{N(m-1)-1},\\
            &K_{1}K_{3}\varepsilon_{\Dm,1}M^{(N-1)(m-1)+m-\Dm-1} = 0,\\
            &(m-K_{1}\Dm)K_{2}\varepsilon_{1,\Dm}M^{(N-1)(m-1)+m-\Dm-1} = 0,\\
            &(m-K_{1}\Dm)K_{3}\varepsilon_{1,1}M^{(N-1)(m-1)+m-2} = 0.
        \end{align*}

        Now, condition \eqref{e:maincondn2} becomes
        \[
            \norm{E_{ij}^{1,1}}\leq\varepsilon_{1,1}\:\:\text{for all}\:i,j\in\P,
        \]
        and condition \eqref{e:maincondn3} becomes
        \begin{align*}
            \rho e^{\lambda m} &+ m(N-1)(m-1)\varepsilon_{1,1}M^{N(m-1)-1}\times e^{\lambda\bigl(N(m-1)+1\bigr)} \leq 1.
        \end{align*}
        Setting \(\varepsilon_{1,1} = \varepsilon\) completes our proof of Corollary \ref{cor:arbitrary_sw}.
    \end{proof}
    \begin{rem}
    \label{rem:min-max_algo}
    \rm{
        A vast body of the switched systems literature is devoted to finding estimates of stabilizing dwell times, see e.g., \cite{ghi,Zhai'02,xyz} and the references therein. In contrast, in the current paper, we deal with guaranteeing stability under a ``given'' minimum dwell time. We refer the reader to \cite{Jungers'13, Heydari'17, abc, def} for existing results on stability and optimal control of switched systems under pre-specified restrictions on minimum dwell time. The main difference of our results with respect to the prior works is in the use of commutation relations between the subsystem matrices in contrast to the widely used multiple Lyapunov-like functions \cite{Branicky'98} based stability analysis under restricted switching. On the one hand, we avoid the design of Lyapunov-like functions corresponding to subsystems that satisfy certain conditions individually and among themselves (see. e.g., the techniques of \cite{abc,def}) and rely directly on the properties of the subsystem matrices to guarantee stability. On the other hand, our stability conditions are limited to the case of switched linear systems unlike Lyapunov-like functions based techniques that extend to switched nonlinear systems under standard assumptions.
    }
    \end{rem}

    \begin{example}
    \label{ex:main_ex}
    \rm{
    Consider \(\P = \{1,2\}\) with \(A_{1} = \pmat{0.02 & 0.93\\-0.53 & -0.92}\) and \(A_{2} = \pmat{0.04 & 0.09\\0.08 & -0.11}\). Clearly, both the subsystems are Schur stable.

    We have
       \( \norm{A_{1}^2} = 1.1204, \norm{A_{1}^3} = 0.5404,
        \norm{A_{2}^2} = 0.0220\),\\ \(\norm{A_{2}^3} = 0.0033.\)
    Therefore, the smallest integer \(m\) for which \eqref{e:key_ineq1} holds is \(m=3\). Also, \(\rho = 0.5404\). Let \(\lambda = 0.01\) leading to
    \(
        \rho e^{\lambda m} = 0.5569 < 1.
    \)

    Now, \(M = \max\{\norm{A_{1}},\norm{A_{2}}\} = 1.3683\), \(K_{1} = \lfloor\frac{m}{\Dm}\rfloor = 1\), \(K_{2} = \lfloor\frac{(N-1)(m-1)}{\Dm}\rfloor = 1\), \(K_{3} = (N-1)(m-1) - K_{2}\Dm = 0\), \(\varepsilon_{\Dm,\Dm} = 0.0133\), \(\varepsilon_{\Dm,1} = 0.1897\), \(\varepsilon_{1,\Dm} = 0.1897\), \(\varepsilon_{1,1} = 0.2108\).

    The conditions of \cite[Proposition 1]{Agrachev'12} do not hold in the setting described above. Indeed,
        \(\rho e^{\lambda m} + m(N-1)(m-1)\varepsilon_{1,1}M^{N(m-1)+1}\times e^{\lambda\bigl(N(m-1)+1\bigr)}
        =6.9513 > 1\).
    Consequently, we cannot conclude about (in)stability of \eqref{e:swsys} under arbitrary switching solely based on commutation relations between the subsystem matrices.

    Now, let \(\Dm = 2\). We have that the left-hand side of inequality \eqref{e:maincondn3} computes to \(0.9664\)
    implying that the switched system \eqref{e:swsys} is GUES under all switching signals that obey a minimum dwell time \(\Dm = 2\).

    We next perturb the elements of the matrices \(A_{1}\) and \(A_{2}\) to generate
	\begin{align*}
		\tilde{A}_{1} &= A_{1} + \pmat{0.03 & 0.02\\-0.07 & 0} = \pmat{0.05 & 0.95\\-0.6 & -0.92},\:\:\text{and}\\
		\tilde{A}_{2}&= A_{2} + \pmat{0 & 0\\0.02 & 0} = \pmat{0.04 & 0.09\\0.1 & -0.11}.
	\end{align*}
	
	We have
		\(\norm{A_{1}^2} = 1.1384,\norm{A_{1}^3} = 0.5180,
		\norm{A_{2}^2} = 0.0243\), \(\norm{A_{2}^3} = 0.0038\)
	leading to \(\rho = 0.5180\) and \(m = 3\). Choosing \(\lambda = 0.0001\) gives
	\(
		\rho e^{\lambda m} = 0.5182 < 1.
	\)
	Also, \(M = \max\{\norm{A_{1}},\norm{A_{2}}\} = 1.4043\), \(K_{1} = K_{2} = 1\), \(K_{3} = 0\), \(\varepsilon_{\Dm,\Dm} = 0.0157\), \(\varepsilon_{\Dm,1} = 0.2244\), \(\varepsilon_{1,\Dm} = 0.2244\), \(\varepsilon_{1,1} = 0.2579\). Consequently, the numerical value of the expression on the left-hand side of inequality \eqref{e:maincondn3} is \(0.9830\).

    Both for the subsystems \(\{A_{1},A_{2}\}\) and \(\{\tilde{A}_{1},\tilde{A}_{2}\}\), we generate \(1000\) random switching signals that obey a minimum dwell time \(\Dm = 2\) on both the subsystems \(1\) and \(2\), and plot the corresponding \((\norm{x(t)})_{t\in\N_{0}}\) in Figures \ref{fig:ex_xplot} and \ref{fig:ex_xplot_rob}, respectively. The initial conditions \(x_{0}\) are chosen from the interval \([-100,100]^{2}\) uniformly at random.
    }
    \end{example}

	\section{Conclusion}
\label{s:concln}
	In this paper we presented sufficient conditions on subsystems such that stability of a switched linear system is preserved under every switching signal that obeys a ``given'' minimum dwell time. Our characterization of stability involves commutation relations between the subsystem matrices. Since we dealt with stability of a switched system under all switching signals obeying a given minimum dwell time, the overarching assumption has been Schur stability of all the subsystem matrices. However, in practice unstable subsystems are often encountered. In addition, the maximum dwell time on all subsystems may also be restricted, see e.g., \cite{abc, def} for examples. In \cite{min-max19} we have reported a matrix commutator based characterization of switching signals that activate both stable and unstable subsystems, obey pre-specified restrictions on both minimum and maximum dwell times, and preserve stability of the resulting switched system.
	


\end{multicols}

\begin{figure}[!htb]
    \centering
    \begin{minipage}{.5\textwidth}
        \centering
        \includegraphics[scale = 0.3]{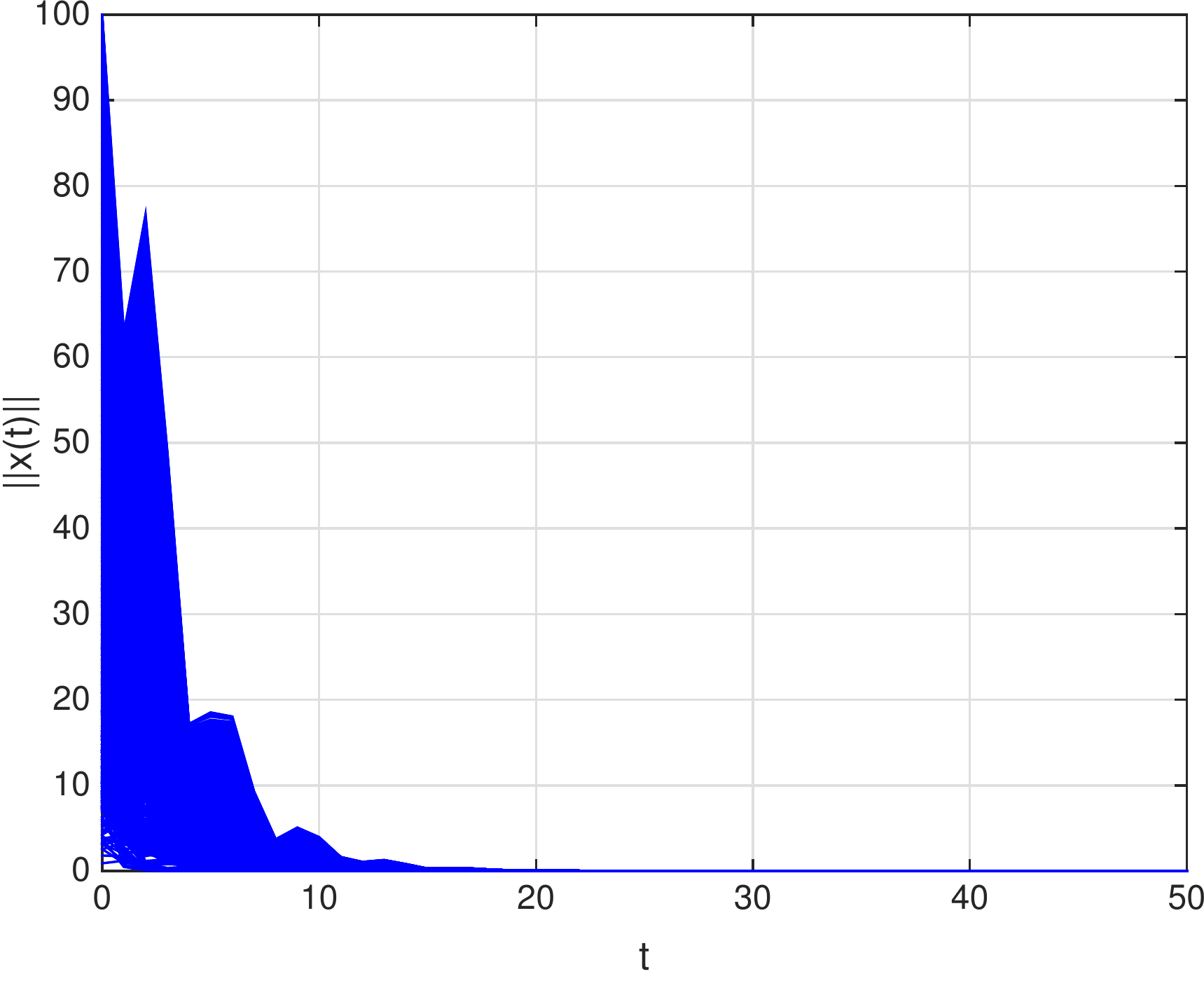}
        \caption{Plot of \((\norm{x(t)})_{t\in\N_{0}}\) with subsystems \(A_{1}\) and \(A_{2}\) under \(\sigma\in\Sd\), \(\Dm = 2\)}\label{fig:ex_xplot}
    \end{minipage}%
    \begin{minipage}{0.5\textwidth}
        \centering
        \includegraphics[scale = 0.3]{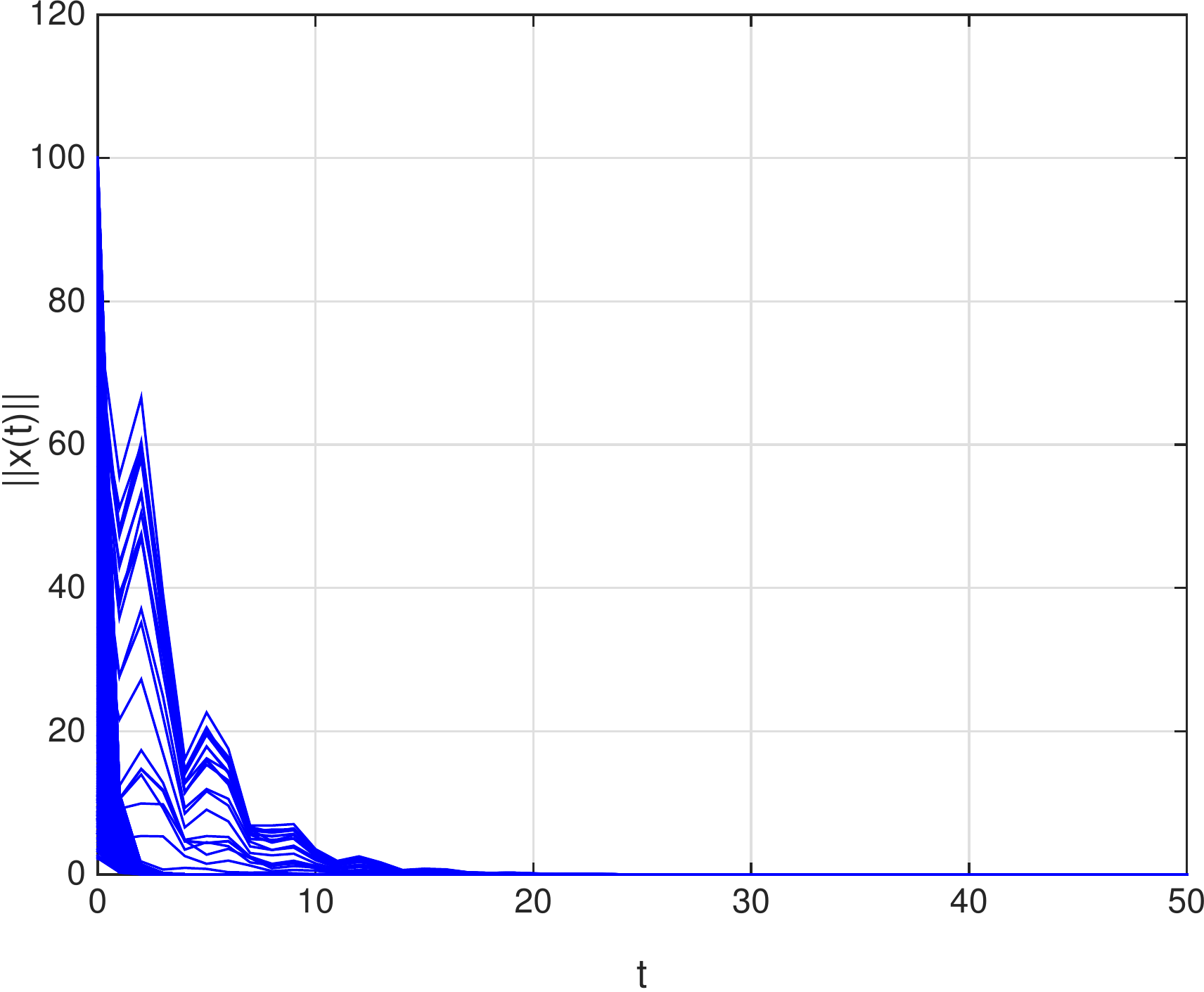}
        \caption{Plot of \((\norm{x(t)})_{t\in\N_{0}}\) with subsystems \(\tilde{A}_{1}\) and \(\tilde{A}_{2}\) under \(\sigma\in\Sd\), \(\Dm = 2\)}\label{fig:ex_xplot_rob}
    \end{minipage}
\end{figure}

\end{document}